\newtheorem{theorem}{Theorem}
\newtheorem{remark}{Remark}
\newcommand{\T}{\intercal}
\def\T{\mathrm{\scriptscriptstyle T}}
\begin{document}
\title{Debiasing Welch's Method for Spectral Density Estimation}
\author{Lachlan C. Astfalck$^{1,2}$, Adam M. Sykulski$^3$, and Edward J. Cripps$^1$}
\date{}
\maketitle
\begin{abstract}
Welch's method provides an estimator of the power spectral density that is statistically consistent. This is achieved by averaging over periodograms calculated from overlapping segments of a time series.  For a finite length time series, while the variance of the estimator decreases as the number of segments increase, the magnitude of the estimator’s bias increases: a bias-variance trade-off ensues when setting the segment number. We address this issue by providing a novel method for debiasing Welch's method which maintains the computational complexity and asymptotic consistency, and leads to improved finite-sample performance. Theoretical results are given for fourth-order stationary processes with finite fourth-order moments and absolutely convergent fourth-order cumulant function. The significant bias reduction is demonstrated with numerical simulation and an application to real-world data. Our estimator also permits irregular spacing over frequency and we demonstrate how this may be employed for signal compression and further variance reduction. Code accompanying this work is available in \texttt{R} and \texttt{python}.   \\ \\ \textbf{Keywords:} Welch's method; bias reduction; tapering; non-parametric estimation; periodogram.
\let\thefootnote\relax\footnote{$^1$ School of Physics, Mathematics \& Computing, The University of Western Australia, 35 Stirling Hwy, Crawley, Australia}
\let\thefootnote\relax\footnote{$^2$ Oceans' Graduate School, The University of Western Australia, 35 Stirling Hwy, Crawley, Australia}
\let\thefootnote\relax\footnote{$^3$ Department of Mathematics, Imperial College London, South Kensington, London SW7 2AX, U.K.}
\end{abstract}
\section{Introduction}

\graphicspath{{images/}}

The periodogram is the fundamental non-parametric estimator of the power spectral density; however, it suffers from two deficiencies: the periodogram is statistically inconsistent and is biased with finite samples. By averaging multiple periodograms calculated from partitioned or overlapping segments of a time series, \cite{welch1967use} overcomes the issue of inconsistency to provide a statistically consistent estimator of the power spectral density, eponymously known as Welch's estimator. Welch's estimator is simple to implement and easy to understand: all reasons that has led to its ubiquity in the applied sciences. However, for finite length time series, the issue of bias remains. Each segment's periodogram is biased and Welch's estimator inherits this bias. Further, as the magnitude of the bias is inversely related to the length of the segments, there is the classical inverse relationship between bias and variance of the estimator; this leads to the undesirable property that we become increasingly more certain in an estimator that is increasingly more biased. The bias is due to a phenomenon called blurring, which is observed most prominently in low power regions, and so it is often ignored as an unimportant aspect of the spectrum; although, such bias may severely impact inference and scientific understanding \citep[e.g.][]{thomson1995propagation}. In this paper, we introduce a debiased analogue to Welch's estimator that we entitle the debiased Welch estimator.

\cite{welch1967use} popularised estimating a spectral density by averaging tapered periodograms calculated from shifted overlapping windows of a time series, thus generalising Bartlett's estimator \citep{bartlett1950periodogram}. 
Statistical properties of Welch's estimator, and its behaviour under varying tapers, parametrisations, and smoothness assumptions of the underlying density, have been established; see, for instance, \cite{nuttall1971spectral}, \cite{zhurbenko1980efficiency,zurbenko1986spectral}, \cite{dahlhaus1985spectral}, and \citep{politis2005complex}. Welch's estimator may further been seen as a special case of modern methodologies such as scalable subbaging \citep{politis2024scalable}. In many cases, these developments improve upon the standard statistical properties of Welch's estimator; however, none consider the explicit debiasing under general conditions. Alternative non-parametric methodologies have been developed to mitigate the effects of bias, most prominently lag-windowing \citep{blackman1958measurement} and multi-tapering \citep{thomson1982spectrum}; although, typically these reduce correlation between distant frequencies at the cost of increasing correlation between nearby frequencies, and so bias remains. In this paper, we extend the results in \cite{sykulski2019debiased}, who provided methodology to debias parametric spectral density estimates, to non-parametric estimation by constructing a natural basis representation to the power spectral density.
From this basis representation we intentionally bias the bases, fit the biased bases to a biased estimate of the power spectral density, and recover debiased coefficients that are used to form our debiased non-parametric estimator.

Novel insight is provided across two theorems. The first presents theoretical asymptotic properties of Welch's estimator, making the finite sample bias explicit and motivating the construction of a new debiased alternative which we propose via a basis representation. To obtain a good representation of the power spectral density, we must specify a large number of basis functions and so estimation of the coefficients via direct optimisation of the Whittle likelihood, as in \cite{sykulski2019debiased}, becomes computationally infeasible. The major contribution of this work is therefore housed in the second theorem, where we introduce methodology that leverages results established in the first theorem, to provide asymptotic results that enable fast and direct computation of the debiased Welch estimator. We build theory for time series requiring fourth-order stationarity with finite fourth-order moments and absolutely convergent fourth-order cumulant function. The theory is established for all tapers, subject to the standard conditions of squared summability of the tapering sequence, thereby extending many classical results that use the raw periodogram, for example, those in \cite{anderson1971statistical}, \cite{brillinger2001time} and \cite{katznelson1968introduction}.

The theoretical developments are also demonstrated by simulation. To evidence the asymptotics, we use a canonical auto-regressive model of order four studied throughout the spectral estimation literature \citep[see for example][]{percival1993spectral}; then, we present results for shorter-length time series generated from a Mat\'ern covariance function. 
Next, we demonstrate the debiased Welch estimator in a real-world case study of coastal wave measurements. This provides an example of where parametric methods struggle to capture the complexity of the process, thus necessitating the use of non-parametric methods. Finally, unlike Welch's estimator, the debiased Welch estimator does not require even spacing over frequency; we use this to demonstrate how we may simultaneously debias and compress the signal. 
Code accompanies our results; \texttt{R} and \texttt{python} implementations are available at \texttt{github.com/TIDE-ITRH}.

\section{Spectral analysis} \label{sec:spec_anal}

\subsection{Fundamentals and definitions}

Define by $\{X_t\}$ a discrete real-valued stochastic process, sampled at interval $\Delta$ and indexed by $t \in \mathcal{Z}$, where $\mathcal{Z}$ is the set of integers. The process $\{X_t\}$ may be either thought of as a true discrete-time stochastic process, such as an auto-regressive model, or as uniformly sampled observations of a continuous-time stochastic process $\{X(t)\}$ in which case $X_t = X(\Delta t)$. Define the auto-covariance sequence of $\{X_t\}$ as $\gamma(\tau) = \mathrm{E}[X_t X_{t-\tau}]$ for $\tau \in \mathcal{Z}$. Following Wiener–Khinchin's theorem, a necessary and sufficient condition for $\gamma(\tau)$ to be the auto-covariance function of $\{X_t\}$ is that there exists a power spectral distribution function $F(\omega)$ such that
\begin{equation} \label{eqn:riemann-stieltjes}
  \gamma(\tau) = \frac{1}{2\pi}\int_{-\pi/\Delta}^{\pi/\Delta} e^{i \omega \tau \Delta} \; \mathrm{d} F(\omega),
\end{equation}
for $\omega \in [-\pi/\Delta, \pi/\Delta]$ defined in radians, where \eqref{eqn:riemann-stieltjes} is a Riemann-Stieltjes integral. We assume fourth-order stationarity, with finite fourth-order moments and fourth-order cumulant function
\begin{equation*}
  \mathcal{K}(\tau_i, \tau_j, \tau_k) = \mathrm{E}[X_0 X_{\tau_i} X_{\tau_j} X_{\tau_k}] - \gamma(\tau_i)\gamma(\tau_j - \tau_k) - \gamma(\tau_j)\gamma(\tau_i - \tau_k) - \gamma(\tau_k)\gamma(\tau_i - \tau_j)
\end{equation*}
that is absolutely convergent. As $\sum_{\tau_i, \tau_j = -\infty}^{\infty} |\gamma(\tau_i)\gamma(\tau_j)|$ is convergent only if $\sum_{\tau = \infty}^{\infty} |\gamma(\tau)| < \infty$, then $\gamma(\tau)$ too is absolutely convergent.
Consequently, the power spectral distribution is absolutely continuous \citep[see Section~4.2][]{shumway2000time}, and so there exists a power spectral density $f(\omega)$ that satisfies $ f(\omega) \mathrm{d}\omega = \mathrm{d} F(\omega)$, and $f(\omega)$ forms a Fourier pair with $\gamma(\tau)$ so that 
\begin{equation} \label{eqn:fourier}
    \gamma(\tau) = \frac{1}{2\pi}\int_{-\pi/\Delta}^{\pi/\Delta} f(\omega) e^{i \omega \tau \Delta} \; \mathrm{d} \omega, \hspace{3mm} f(\omega) = \Delta \sum_{\tau = -\infty}^{\infty} \gamma(\tau) e^{-i \omega \tau \Delta}.
\end{equation}
As $f(\omega)$ is uniformly continuous over this interval, and according to the Vitali convergence theorem, $f(\omega)$ is Riemann integrable in $\omega$.

Say we observe from $\{X_t\}$ an $n$-dimensional observation $\mathrm{X}_n = (x_0, \dots, x_{n-1})$. The most common estimator of $f(\omega)$ from $\mathrm{X}_n$ is the periodogram, $I_n(\omega)$, defined as the squared modulus of the discrete Fourier transform of $\mathrm{X}_n$,
\begin{equation} \label{eqn:periodogram}
    I_n(\omega) = |J_n(\omega)|^2, \hspace{3mm} J_n(\omega) = \Delta^{1/2} n^{-1/2} \sum_{t=0}^{n-1}  x_t e^{- i \omega t \Delta}.
\end{equation}
Alternatively, we may calculate the periodogram via the discrete Fourier transform of $\check{\gamma}(\tau)$, a biased estimator to $\gamma(\tau)$,
\begin{equation*}
  I_n(\omega) = \Delta \sum_{\tau = -n+1}^{n-1} \check{\gamma}(\tau) e^{-i \omega \tau \Delta}, \hspace{3mm} \check{\gamma}(\tau) = \frac{1}{n} \sum_{t = 0}^{n - |\tau| -1} x_{t + |\tau|} x_t.
\end{equation*}
The periodogram is a biased and inconsistent estimator of $f(\omega)$. It has expectation $\mathrm{E}[I_n(\omega)] = f(\omega) * \mathcal{F}_n(\omega)$ where 
\begin{equation} \label{eqn:fejer}
\mathcal{F}_n(\omega) = \frac{\Delta}{2 \pi n} \frac{\sin^2(n \omega \Delta/2)}{\sin^2(\omega \Delta/2)}
\end{equation}
is the Fej\'er kernel, and variance $\mathrm{var}[I_n(\omega)] \approx f^2(\omega)$ which does not reduce with $n$, thus causing the inconsistency. The periodogram is asymptotically unbiased as $\mathcal{F}_n(\omega)$ tends to a Dirac delta function as $n \rightarrow \infty$. Often, practitioners will appeal to this to argue that the periodogram is a suitable estimator; however, the effects due to bias are related not only to $n$, but also to the dynamic range of $f(\omega)$. Even with massive $n$, for common $f(\omega)$ encountered in the sciences, the periodogram can be too biased to be useful \citep[e.g. see the discussion in][]{thomson1982spectrum}.

A standard technique to alleviate the bias in the periodogram is to apply a taper, or window, to the data. Tapering pre-multiplies the data with a taper to form the product $h_t X_t$ for $t = 0, \dots, n-1$ where $\sum_{t=0}^{n-1} h_t^2 = 1$. The sequence $\{h_t\}$ is called the data taper. The tapered periodogram, otherwise known as the modified periodogram, is defined as 
\begin{equation*}
    I_n(\omega; h) = |J_n(\omega; h)|^2, \hspace{3mm} J_n(\omega; h) = \Delta^{1/2} \sum_{t=0}^{n-1} h_t x_t e^{- i \omega t \Delta}.
\end{equation*}
The expectation of the tapered periodogram is $\mathrm{E}[I_n(\omega; h)] = f(\omega) * \mathcal{H}_n(\omega)$ where $\mathcal{H}_n(\omega)$ is the spectral window for $\{h_t\}$ defined by
\begin{equation*}
\mathcal{H}_n(\omega) = \Delta \left| \sum_{t=0}^{n-1} h_t e^{- i \omega t \Delta} \right|^2.
\end{equation*}
Tapering can be effective in removing bias; however, it is not a panacea. The data taper is generally chosen so as to reduce bias in the form of broadband blurring in $I_n(\omega; h)$; this comes with the trade-off that bias from narrowband blurring is increased. Further, tapering, in effect, reduces the amount of data and so the tapered periodogram has a higher variance as compared to the standard periodogram. This variance can be reduced by applying multiple tapers (multi-tapering), but this then reintroduces bias such that the level of the bias-variance trade-off must be in practice controlled by the practitioner. Setting $h_t = n^{-1/2}$ recovers the periodogram as $\mathcal{H}_n(\omega) = \mathcal{F}_n(\omega)$. For this reason all periodograms may be considered to be tapered, and so we will use the notation $I_n(\omega; h)$ to refer to both tapered and standard periodograms throughout.

\subsection{Debiased parametric estimation}

Rather than using $I_n(\omega; h)$ as a direct estimate to $f(\omega)$, it is common to instead use $I_n(\omega; h)$ to fit some parametric form $f(\omega; \vartheta)$, with Fourier pair $\gamma(\omega; \vartheta)$. To avoid the unwieldy computation often associated with time-domain methods requiring matrix inversion, it is standard to approximate the exact maximum likelihood with a discretized form of the Whittle likelihood
\begin{equation} \label{eqn:whittle}
l_\mathrm{W}(\vartheta) = - \sum_{\omega \in \Omega_n} \left( \log f(\omega; \vartheta) + \frac{I_n(\omega; h)}{f(\omega; \vartheta)}\right),
\end{equation}
where $\Omega_n = 2\pi(\Delta n)^{-1}(-\lfloor n/2 \rfloor, \dots, -1, 0, 1, \dots, \lceil n/2 \rceil - 1)$ is the set of Fourier frequencies for an observed stochastic process of length $n$. Due to the inherent bias in $I_n(\omega; h)$, parameter estimation from using \eqref{eqn:whittle} can be biased \citep[for example, see][]{dahlhaus1988small}. \citet{sykulski2019debiased} proposed a solution to this problem, whereby the pseudolikelihood
\begin{equation} \label{eqn:debiased_whittle}
l_\mathrm{D}(\vartheta) = - \sum_{\omega \in \Omega_n} \left( \log \check{f}_n(\omega; \vartheta) + \frac{I_n(\omega; h)}{\check{f}_n(\omega; \vartheta)}\right)
\end{equation}
is defined. Here, we substitute $\check{f}_n(\omega; \vartheta) = E[I_n(\omega; h)] = f(\omega; \vartheta) * \mathcal{H}_n(\omega)$ for $f(\omega; \vartheta)$ in \eqref{eqn:whittle}. We use the caron diacritic throughout this manuscript to indicate intentionally biased quantities. \citet{sykulski2019debiased} further showed that the convolution $f(\omega; \vartheta) * \mathcal{H}_n(\omega)$ can be efficiently calculated as
\begin{equation} \label{eqn:debiased_fft}
\check{f}_n(\omega; \vartheta) = 2\Delta \times \mathrm{Re}\left\{\sum_{\tau = 0}^{n-1} \left( \sum_{t = 0}^{n - \tau -1} h_t h_{t+\tau} \right) \gamma(\tau; \vartheta) e^{-i\omega \tau \Delta}\right\} - \Delta \times \gamma(0; \vartheta)
\end{equation}
using fast Fourier transforms. When $h_t = n^{-1/2}$, and so $I_n(\omega; h)$ is the periodogram, $\sum_{t = 0}^{n - \tau -1} h_t h_{t+\tau} = 1 - \tau/n$, and \eqref{eqn:debiased_fft} is the standard form for the debiased Whittle likelihood as presented in Equation~(9) of \cite{sykulski2019debiased}. This technique debiases estimates of $\vartheta$, and hence $f(\omega; \vartheta)$ and $\gamma(\omega; \vartheta)$, in parametric estimation.

\subsection{Welch's method for spectral density estimation}

Welch's method segments a time series $x_0, \dots, x_{n-1}$ into $M$ partitioned or overlapping blocks of length $L$, and applies a length $L$ data taper to each block. We define the periodogram of the $m$th block, for $m = 0, \dots, M-1$, as
\begin{equation*}
I_{L}^{m}(\omega; h) = \Delta \left| \sum_{t=0}^{L-1} h_t x_{(t + mS)} e^{-i\omega t \Delta} \right|^2,
\end{equation*}
where $S$ is an integer-valued shift factor satisfying $0 < S \leq L$, so that $p = (L-S)/L$, where $0 \leq p < 1$, is the percentage overlap of the blocks, and the relationship between $M$, $L$ and $p$ is
\begin{equation} \label{eqn:ml_tradeoff}
    L[M(1-p)+p] = n,
\end{equation}
such that setting two of $\{M,L,p\}$ fixes the third for a given $n$.
Welch's estimator is defined as
\begin{equation} \label{eqn:welch_estimator}
  \bar{I}_L(\omega; h) = \frac{1}{M}\sum_{m = 0}^{M-1}I_L^{m}(\omega; h).
\end{equation}
The degree of overlap, $p$, and the choice of data taper, $\{h_t\}$, are modelling choices often informed by either desired properties of the analysis or preferences within scientific disciplines. For instance, it is standard in the engineering sciences to assume a 50\% overlap ($p=0.5$) with a Hamming taper: see the \texttt{MATLAB} documentation of the function \texttt{pwelch}. Welch's estimator has expectation $\mathrm{E}[\bar{I}_L(\omega; h)] = f(\omega) * \mathcal{H}_L(\omega)$ and variance
\begin{align}
\mathrm{var}\left[\bar{I}_{L}(\omega; h)\right] =& \frac{1}{M^2} \sum_{m=0}^{M-1} \mathrm{var}\left[I_{L}^{m}(\omega; h)\right] + \frac{2}{M^2} \sum_{m<m'} \mathrm{cov}\left[I_{L}^{m}(\omega; h), I_{L}^{m'}(\omega; h)\right] \nonumber\\
=& \frac{\mathrm{var}[I_{L}(\omega; h)]}{M}\left(1 + \frac{2}{M} \sum_{m<m'} \rho(\omega; |m-m'|, h) \right), \label{eqn:welch_variance2}
\end{align}
where $\rho(\omega; |m-m'|, h) = \mathrm{cor}\left[I_{L}^{m}(\omega; h), I_{L}^{m'}(\omega; h)\right]$ and $\mathrm{var}[I_{L}(\omega; h)] = \mathrm{var}[I_{L}^{m}(\omega; h)]$ for all $m$. When $n >> M$, Welch's estimator is approximately equal to a lag-window estimator where the lag-window is given by the convolution of $h_t$ with itself. This is most simple to see for Bartlett's estimator ($p = 0$ and $h_t = L^{-1/2}$) where the lag-window is a triangular function of width $2L$, which results from the discrete convolution of $h_t = L^{-1/2}$. The primary focus of this manuscript is to study the debiasing of Welch's estimator; although, we note, that the presented results may also be readily extended to certain lag-window estimators. We now establish some important properties of Welch's estimator in Theorem~\ref{the:welch_var}.

\begin{theorem} \label{the:welch_var}
  Assume a fourth-order stationary stochastic process $\{X_t\}$ with finite fourth-order moments and absolutely convergent fourth-order cumulant function. Welch's estimator, defined in \eqref{eqn:welch_estimator}, of the observed process $\mathrm{X}_n = (x_0, \dots, x_{n-1})$, with data taper $\{h_t\}$ that satisfies $\sum h_t^2 = 1$, has the following properties.
  \begin{enumerate}[label=\alph*)]
    \item The estimator, $\bar{I}_{L}(\omega; h)$, satisfies $\bar{I}_{L}(\omega; h) = f(\omega) * \mathcal{H}_L(\omega) + \mathcal{O}_p(M^{-1/2})$.
    \item The variance $\mathrm{var}[\bar{I}_L(\omega;h)]$ converges asymptotically, in $L$, to 
    \begin{equation*}
    \lim_{L \rightarrow \infty} \mathrm{var}[\bar{I}_L(\omega;h)] = c \; \mathrm{var}[I_L(\omega; h)]
    \end{equation*}
    for all $\omega \neq 0, \omega_\mathcal{N}$. Here, $c$ is a constant of proportionality that is constant over $\omega$, and $\omega_\mathcal{N}$ denotes the Nyquist frequency.
    \item The quantity $M^{-1/2} \bar{I}_{L}(\omega; h)$ tends to a Gaussian distribution as $M \rightarrow \infty$.
  \end{enumerate}
\end{theorem}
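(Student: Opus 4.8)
All three statements rest on two exact identities from Section~\ref{sec:spec_anal}: the mean identity $\mathrm{E}[\bar I_L(\omega;h)] = f(\omega)*\mathcal{H}_L(\omega)$ and the variance decomposition~\eqref{eqn:welch_variance2}. My plan is to treat (a) by a variance bound plus Chebyshev, (b) by passing to the limit $L\to\infty$ inside~\eqref{eqn:welch_variance2} and showing the limiting inter-block correlations do not depend on $\omega$, and (c) by recognising $\bar I_L(\omega;h)$ as a normalised sum of a strictly stationary, weakly dependent sequence of block periodograms and invoking a central limit theorem.

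For (a), because the mean is exactly $f(\omega)*\mathcal{H}_L(\omega)$ it is enough to show $\mathrm{var}[\bar I_L(\omega;h)] = \mathcal{O}(M^{-1})$, and by~\eqref{eqn:welch_variance2} this reduces to $M^{-1}\sum_{m<m'}\rho(\omega;|m-m'|,h) = \mathcal{O}(1)$. Each block overlaps only $\mathcal{O}(1)$ others --- at most $\lceil(1-p)^{-1}\rceil$ on each side --- and for those pairs I would use the crude bound $|\rho|\le 1$; for pairs of blocks with disjoint supports I would expand $\mathrm{cov}[I_L^m,I_L^{m'}]$ through the product theorem into its second-order ($\gamma\,\gamma$) and fourth-order ($\mathcal{K}$) contributions and bound them by the tails of $\gamma$ and $\mathcal{K}$ beyond the time gap $|m-m'|S - L$, which are summable over $|m-m'|$ under the stated hypotheses (with the usual care over the tail sums). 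This gives $\sum_{m<m'}\rho(\omega;|m-m'|,h) = \mathcal{O}(M)$, hence $\mathrm{var}[\bar I_L(\omega;h)] = \mathcal{O}(M^{-1})$, and Chebyshev's inequality then yields $\bar I_L(\omega;h) = f(\omega)*\mathcal{H}_L(\omega) + \mathcal{O}_p(M^{-1/2})$.

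For (b), I would rewrite~\eqref{eqn:welch_variance2}, using stationarity in $m$, as $\mathrm{var}[\bar I_L(\omega;h)]/\mathrm{var}[I_L(\omega;h)] = M^{-1}\{1 + 2M^{-1}\sum_{j=1}^{M-1}(M-j)\rho(\omega;j,h)\}$, so that it suffices to show that $\rho(\omega;j,h)$ has, for each $j$, a limit as $L\to\infty$ that is independent of $\omega$ for $\omega\ne 0,\omega_\mathcal{N}$. Expanding $\mathrm{cov}[I_L^0,I_L^j] = |\mathrm{cov}(J_L^0,\overline{J_L^j})|^2 + |\mathrm{cov}(J_L^0,J_L^j)|^2 + \mathrm{cum}$, the fourth-cumulant term is $\mathcal{O}(L^{-1})$ and vanishes; the conjugate term $\mathrm{cov}(J_L^0,J_L^j)$ vanishes because its summand carries the factor $e^{-i\omega(s+t)\Delta}$, which fails to produce cancellation only at $\omega = 0,\omega_\mathcal{N}$ --- precisely the excluded frequencies; and $\mathrm{cov}(J_L^0,\overline{J_L^j})$ converges to $f(\omega)\,e^{-i\omega jS\Delta}B_j$, where $B_j$ is a taper-overlap factor fixed by $j$ and $p$ only, since the sum localises to the lags over which $\sum_v \gamma(v)e^{-i\omega v\Delta} = f(\omega)/\Delta$ factors out. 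Together with $\mathrm{var}[I_L(\omega;h)]\to f^2(\omega)$ for $\omega\ne 0,\omega_\mathcal{N}$, this gives $\rho(\omega;j,h)\to B_j^2$, free of $\omega$, whence $c = M^{-1}\{1 + 2M^{-1}\sum_{j=1}^{M-1}(M-j)B_j^2\}$, which is constant in $\omega$.

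For (c), I would fix $L$ and let $M\to\infty$. The sequence $\{I_L^m(\omega;h)\}_{m\ge 0}$ is strictly stationary because $\{X_t\}$ is, and it is weakly dependent: $I_L^m$ and $I_L^{m'}$ are functions of disjoint data windows once $|m-m'|\ge\lceil L/S\rceil$, so their dependence is inherited from that of $\{X_t\}$, and their covariances are summable by the estimates from part (a). Writing $\bar I_L(\omega;h) - \mathrm{E}[\bar I_L(\omega;h)] = M^{-1}\sum_{m=0}^{M-1}\{I_L^m(\omega;h) - \mathrm{E}[I_L^m(\omega;h)]\}$, I would apply a central limit theorem for normalised partial sums of such a sequence --- via a Bernstein big-block/small-block decomposition that exploits the asymptotic independence of well-separated blocks, or via the classical cumulant-based central limit arguments for spectral averages in the style of \cite{brillinger2001time} --- to conclude that $M^{1/2}\{\bar I_L(\omega;h) - \mathrm{E}[\bar I_L(\omega;h)]\}$ converges in distribution to a centred Gaussian with variance $\sum_{k\in\mathcal{Z}}\mathrm{cov}[I_L^0(\omega;h),I_L^k(\omega;h)]$; equivalently, the $\mathcal{O}_p(M^{-1/2})$ fluctuation of $\bar I_L(\omega;h)$ is asymptotically Gaussian. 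I expect this final step to be the main obstacle: upgrading the fourth-order cumulant hypotheses to a genuine distributional limit means controlling the higher-order joint cumulants of the block average (or verifying a Lindeberg/mixing-type condition), and this is where either an additional dependence assumption or a careful blocking argument must be brought in.
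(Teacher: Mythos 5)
Your overall architecture --- exact unbiasedness for the blurred target plus an $\mathcal{O}(M^{-1})$ variance bound and Chebyshev for (a), the Isserlis-type decomposition $\mathrm{cov}[I_L^m,I_L^{m'}] = |\mathrm{cov}(J_L^m,\overline{J_L^{m'}})|^2 + |\mathrm{cov}(J_L^m,J_L^{m'})|^2 + (\text{cumulant term})$ with $\rho(\omega;j,h)\to B_j^2$ for (b), and a blocking/cumulant central limit theorem for the stationary sequence of block periodograms for (c) --- is the standard route and matches the paper's proof in all essentials. Two technical points in your sketch deserve attention.

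First, in (a) the step ``bound the disjoint-block covariances by the tails of $\gamma$ beyond the gap $|m-m'|S-L$, which are summable over $|m-m'|$'' does not close as stated: writing $T_j = \sum_{|w|\ge jS-L}|\gamma(w)|$, absolute summability of $\gamma$ gives $T_j\to 0$ but not $\sum_j T_j<\infty$, since each $\gamma(w)$ is counted in roughly $|w|/S$ of the tails, so summability of the tails would require $\sum_w |w|\,|\gamma(w)|<\infty$, which is not assumed. The fix is to keep the taper autocorrelation in play: $\mathrm{cov}(J_L^0,\overline{J_L^j}) = \Delta\sum_w\gamma(w)e^{-i\omega(w+jS)\Delta}c_L(w+jS)$ with $c_L(u)=\sum_s h_sh_{s-u}$, and since $c_L(u)=0$ for $|u|\ge L$ and $|c_L(u)|\le 1$ by Cauchy--Schwarz, each fixed lag $w$ contributes to at most $2/(1-p)+1$ values of $j$; hence $\sum_j|\mathrm{cov}(J_L^0,\overline{J_L^j})|\le \Delta\,(2/(1-p)+1)\sum_w|\gamma(w)|$ directly, and a similar counting argument handles the fourth-cumulant term. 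Second, you are right to flag (c) as the genuine obstacle: a fourth-order cumulant condition controls only the variance of $\bar I_L(\omega;h)$, not its distribution, so the distributional limit requires importing either summability of joint cumulants of all orders (the Brillinger route), a mixing condition, or Gaussianity/linearity of $\{X_t\}$; your Bernstein big-block/small-block plan is the right shape only once such a condition is in hand, and this should be stated as an explicit additional hypothesis rather than left implicit. (Also note the statement must be read, as you do, as $M^{1/2}\{\bar I_L(\omega;h) - \mathrm{E}[\bar I_L(\omega;h)]\}$ converging to a centred Gaussian; the quantity $M^{-1/2}\bar I_L(\omega;h)$ as literally written converges to zero in probability.)
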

\begin{proof}
Proofs are provided in the Supplementary Material.
\end{proof}
\begin{remark}
  As $n$ increases, then it is not possible to increase both $L$, the block length, and $M$, the number of blocks, linearly with $n$ for a fixed overlap $p$, see \eqref{eqn:ml_tradeoff}; a trade-off is required. For fixed $L$ and linearly increasing $M$, Theorem~\ref{the:welch_var}a establishes that Welch's estimate has the optimal convergence rate of $\mathcal{O}_p(M^{-1/2}) = \mathcal{O}_p(n^{-1/2})$, but the estimator is inconsistent for all processes except white noise due to the bias inherent in $f(\omega) * \mathcal{H}_L(\omega)$ for fixed $L$ as discussed. On the other extreme, for fixed $M$ and linearly increasing $L$, the bias asymptotically vanishes but the variance is $\mathcal{O}_p(1)$ (Theorem~\ref{the:welch_var}a) and therefore the Welch estimate suffers the same inconsistency as the periodogram (Theorem~\ref{the:welch_var}b). Consistency is only achieved by increasing both $L$ and $M$ with $n$ at reduced rates, say, $L = \mathcal{O}(n^\alpha)$ and $M = \mathcal{O}(n^{1 - \alpha})$ for $0 < \alpha < 1$, to satisfy \eqref{eqn:ml_tradeoff}. In practice, regularity assumptions may be employed to aid the selection of $L$ and $M$. For example, if we assume $h_t = L^{-1/2}$ and $f(\omega)$ to be Lipschitz continuous then, as shown in \cite{fejer1910lebesguessche},
  \begin{equation*}
    \mathrm{bias}\left[\bar{I}_{L}(\omega; h)\right] = \mathrm{E}[\bar{I}_{L}(\omega; h)] - f(\omega) = \mathcal{O}_p\left(\frac{\log L}{L}\right)
  \end{equation*}
  and so the optimal value of $\alpha$, with respect to mean squared error, is approximately $\alpha = 1/3$. Further regularity conditions and specification of tapers may yield even faster convergence of the bias: assumptions on the order of differentiability of $f(\omega)$ can further reduce the optimal choice of $\alpha$ \citep[see, for example,][]{zhurbenko1980efficiency}, and tapers may be chosen with associated properties, such as the family of Slepian tapers that maximally concentrate bandwidth energy \citep{slepian1978prolate}.
\end{remark}

\section{Debiasing Welch's Method} \label{sec:debiased_welch}

\subsection{Constructing biased bases} \label{sec:biased_bases}

To construct a debiased and consistent estimate to $f(\omega)$ based on Welch's method, we use concepts proposed by \cite{sykulski2019debiased} but extend them to a non-parametric setting. Specifically, 
here we construct from a family of bases $B_k(\omega)$, a family of biased bases $\check{B}_k(\omega)$. We assume the functional form $f(\omega; \vartheta) = \sum_{k=1}^K a_k B_k(\omega)$ and set $\check{f}(\omega; \vartheta) = \sum_{k=1}^K a_k \check{B}_k(\omega)$ in \eqref{eqn:debiased_whittle} to provide debiased inference for the parameters $\vartheta = (a_1, \dots, a_K)$. Here, the $B_k(\omega)$ are basis functions specified from some family of bases chosen to represent some desirable property of $f(\omega)$. We set $\int_{-\infty}^{\infty} B_k(\omega) \mathrm{d} \omega = 1$. Due to the linearity of the integral in \eqref{eqn:fourier}, assuming $f(\omega; \vartheta) = \sum_{k = 1}^K a_k B_k(\omega)$ implies $\gamma(\tau; \vartheta) = \sum_{k = 1}^K a_k \rho_k(\tau)$ where each $\rho_k(\tau)$ are the auto-correlation functions calculated from $B_k(\omega)$ as
\begin{equation} \label{eqn:gamma_i}
    \rho_k(\tau) = \frac{1}{2\pi} \int_{-\pi/\Delta}^{\pi/\Delta} B_k(\omega) e^{i \omega \tau \Delta} \mathrm{d}\omega.
\end{equation}
Due to the distributivity of the convolution operator the expected periodogram is $\check{f}(\omega; \vartheta, h) = \mathrm{E}[I_n(\omega; h)] = \sum_{k=1}^K a_k \check{B}_k(\omega; h)$ where each $\check{B}_k(\omega; h)$ is calculated exactly using a fast Fourier transform by substituting $\rho_k(\tau)$ for $\gamma(\tau; \vartheta)$ in \eqref{eqn:debiased_fft}.

\subsection{A Riemann approximation to the spectral density} \label{sec:riemann_approx}

Section~\ref{sec:biased_bases} describes a general approach to bias any basis family. Here, we provide specific details with respect to a Riemannian approximation of $f(\omega)$. As $f(\omega)$ is Riemann integrable then we may approximate $f(\omega)$ by summating contiguous rectangular functions; this approximation becomes exact as the width of the rectangular functions tends to zero. Define the rectangular function by
\begin{equation*}
  \mathrm{rect}(\omega) = \begin{cases} 1 & |\omega| < 1/2 \\ 1/2 & |\omega| = 1/2 \\ 0 & \text{elsewhere} \end{cases}
\end{equation*}
and the symmetric rectangular function with centre $\omega^\mathrm{c}$ and width $\delta$ as
\begin{equation} \label{eqn:sym_rect}
  \mathrm{symrect}(\omega; \omega^\mathrm{c}, \delta) = \mathrm{rect}\left(\frac{\omega - \omega^\mathrm{c}}{\delta}\right) + \mathrm{rect}\left(\frac{\omega + \omega^\mathrm{c}}{\delta}\right),
\end{equation}
for $\omega^\mathrm{c} \geq \delta/2$. As we are concerned with real-valued time series, $f(\omega) = f(-\omega)$. 
Define $(\omega_0, \omega_1, \dots, \omega_K)$ as a partition of $[0, \omega_\mathcal{N}]$. The mid-point Riemann approximation to $f(\omega)$ is
\begin{equation} \label{eqn:riemann}
  f(\omega) \approx \sum_{k=1}^K f\left(\omega_k^\mathrm{c}\right) \mathrm{symrect}\left(\omega; \omega_k^\mathrm{c}, \delta_k \right), \quad \omega \in [-\pi/\Delta, \pi/\Delta],
\end{equation}
with centres $\omega_k^\mathrm{c} = \omega_{k-1} + \delta_k/2$ and widths $\delta_k = \omega_{k} - \omega_{k-1}$. The approximation in \eqref{eqn:riemann} converges to equality as $K \rightarrow \infty$ and all $\delta_k \rightarrow 0$. This approach does not require us to space the bases regularly; any partition can be defined, allowing for irregularly spaced bases. For now, so as to mimic the behaviour of Welch's estimate, we assume the bases to be regularly spaced in $\omega$, and so $\delta_k = \delta$ for all $k$. We define the centres and widths of the symmetric rectangular functions so that for a maximum value of $K = \lceil (L - 1)/2 \rceil$, $\omega_k^\mathrm{c} = k \delta - \delta/2$ and $\delta = \pi /(\Delta K)$. Setting $B_k(\omega) = \mathrm{symrect}\left(\omega; \omega_k^\mathrm{c}, \delta \right)$, and following the results of \citet{tobar2019band}, we calculate via \eqref{eqn:gamma_i}
\begin{equation} \label{eqn:sinc_kernel}
  \rho_k\left(\tau; \omega_k^\mathrm{c}, \delta \right) =  \delta^{-1}\mathrm{sinc}\left(\tau \delta \right) \cos\left(\omega_k^\mathrm{c} \tau \right),
\end{equation}
where $\mathrm{sinc}(\tau) = (\pi \tau)^{-1}\sin(\pi \tau)$ and $\mathrm{sinc}(0) = 1$. To calculate $\check{B}_k(\omega; h)$ we substitute $\rho_k\left(\tau; \omega_k^\mathrm{c}, \delta \right)$ in \eqref{eqn:debiased_fft}.  

\subsection{Model fitting} \label{sec:model_fitting}

For a given choice of bases $B_k(\omega) $ the final step is to estimate $\vartheta=(a_1,...,a_k)$ to recover the spectral estimate $f(\omega; \vartheta) = \sum_{k=1}^K a_k B_k(\omega)$. The preferred debiased approach would be to specify $\check{f}(\omega; \vartheta) = \sum_{k = 1}^K a_k \check{B}_k(\omega; h)$ and solve for the $a_k$ by maximising the debiased Whittle likelihood defined in \eqref{eqn:debiased_whittle}. For the approximation to $f(\omega)$ in \eqref{eqn:riemann} to be reasonable, however, we are required to define a potentially large number of bases, $K$, and so it is likely that maximising \eqref{eqn:debiased_whittle} is computationally infeasible using naive optimisation routines. 

To proceed we observe from Theorem~\ref{the:welch_var}c that $\bar{I}_L(\omega;h)$ converges, with $M$, to be Gaussian, and so we model $\bar{I}_L(\omega; h) = \sum_{k=1}^K a_k \check{B}_k(\omega; h) + \epsilon(\omega)$, for heteroskedastic and Gaussian $\epsilon(\omega)$ with $\mathrm{var}[\epsilon(\omega)] = \mathrm{var}[\bar{I}_L(\omega; h)]$. Define the vector $\check{\mathrm{B}}(\omega; h) = (\check{B}_1(\omega; h), \dots, \check{B}_K(\omega; h))$ as the column vector of all the biased bases at $\omega$. Weighted least squares can be used to estimate a solution of the form
\begin{equation} \label{eqn:least_squares}
  \underset{\vartheta}{\arg \min} \; \sum_{\omega \in \Omega_L} \left\{\mathrm{var}[\bar{I}_L(\omega; h)]^{-1} \left(\bar{I}_L(\omega; h) - \vartheta^\T \check{\mathrm{B}}(\omega; h) \right)^2\right\}.
\end{equation}
As $\mathrm{var}[\bar{I}_L(\omega; h)]$ is dependent on $f(\omega)$, the true power spectral density we are trying to estimate, the solution to \eqref{eqn:least_squares} is not analytical. As an approximation to \eqref{eqn:least_squares}, we substitute $\bar{I}_L(\omega;h)^2$ for $\mathrm{var}[\bar{I}_L(\omega; h)]$ and instead solve 
\begin{equation} \label{eqn:least_squares2}
  \hat{\vartheta} = \underset{\vartheta}{\arg \min} \; \sum_{\omega \in \Omega_L} \left\{\bar{I}_L(\omega;h)^{-2} \left(\bar{I}_L(\omega; h) - \vartheta^\T \check{\mathrm{B}}(\omega; h) \right)^2\right\}.
\end{equation}
The motivation for \eqref{eqn:least_squares2} is as follows. From Theorem~\ref{the:welch_var}b, as $\lim_{L\rightarrow \infty} \mathrm{var}\left[\bar{I}_{L}(\omega; h)\right] = c \; \mathrm{var}\left[I_{L}(\omega; h)\right]$, over all $\omega$, we substitute $\mathrm{var}\left[I_{L}(\omega; h)\right]$ for $\mathrm{var}\left[\bar{I}_{L}(\omega; h)\right]$ in \eqref{eqn:least_squares}. Further, as we establish in Theorem~\ref{the:variance} below, $\bar{I}_L(\omega;h)^2$ converges in $M$ and $L$ to $\mathrm{var}[I_L(\omega; h)]$, and so we approximate $\mathrm{var}[I_L(\omega; h)]$ by $\bar{I}_L(\omega;h)^2$, resulting in \eqref{eqn:least_squares2}. From $\hat{\vartheta} = (\hat{a}_1, \dots, \hat{a}_K)$, we define a debiased functional estimate of $f(\omega)$ as $f(\omega; \hat{\vartheta}) = \sum_{k = 1}^{K} \hat{a}_k B_k(\omega)$. For the symmetric rectangular basis function of Section 3.2, we define the debiased Welch estimator at a discrete set of frequencies as 
\begin{equation*}
    \hat{I}(\omega_k) = \hat{a}_k B_k(\omega_k) = \hat{a}_k / \delta_k, \quad \omega_k \in \{\omega_1^\mathrm{c}, \dots, \omega_K^\mathrm{c}\};
\end{equation*}
this is analogous to the frequencies for which Welch's estimator is generally defined. Due to finite-sample error, the debiased Welch estimator is not guaranteed to yield a non-negative solution. We address this in Section~\ref{sec:computation} by constraining the solution space of $\hat{\vartheta}$ so that $\hat{\vartheta} \in \mathcal{R}^+$ where $\mathcal{R}^+$ denotes the non-negative space of reals. As by definition, $f(\omega) \in \mathcal{R}^+$, and following similar arguments to Theorem~2 of \cite{politis1995bias}, we can establish that constraining $\hat{\vartheta} \in \mathcal{R}^+$ yields a better estimator with respect to mean squared error.

\begin{theorem} \label{the:variance}
Assume a fourth-order stationary process, $\{X_t\}$, with finite fourth-order moments and absolutely convergent fourth-order cumulant function, $\mathcal{K}(\tau_i, \tau_j, \tau_k)$. The variance of the periodogram of any observed length-$L$ segment of $\mathrm{X}_n = (x_0, \dots, x_{n-1})$, for all $\omega \neq 0, \omega_\mathcal{N}$, and with data taper $\{h_t\}$ that satisfies $\sum h_t^2 = 1$, is
\begin{enumerate}[label=\alph*)]
\item $\mathrm{var}[I_L(\omega; h)] = \bar{I}_L(\omega;h)^2 + \mathcal{O}_p \left(\frac{1}{\sqrt{M}} + \frac{\log^2 L}{L^2}\right)$, when $\mathcal{K}(\tau_i, \tau_j, \tau_k) = 0$ for all $\tau_i, \tau_j, \tau_k$; and
\item $\mathrm{var}[I_L(\omega; h)] = \bar{I}_L(\omega;h)^2 + \mathcal{O}_p \left(\frac{1}{\sqrt{M}} + \frac{1}{L}\right)$, otherwise.
\end{enumerate}
\end{theorem}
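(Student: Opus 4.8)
The plan is to reduce the claim to an exact decomposition of $\mathrm{var}[I_L(\omega;h)]$ and then control each piece. Writing $I_L(\omega;h)=J_L(\omega;h)\overline{J_L(\omega;h)}$ and expanding the fourth moment $\mathrm{E}[I_L(\omega;h)^2]$ by the moment-to-cumulant formula for the four mean-zero complex variables $J_L(\omega;h),\overline{J_L(\omega;h)},J_L(\omega;h),\overline{J_L(\omega;h)}$, the three pair-partitions contribute $2(\mathrm{E}[I_L(\omega;h)])^2+|\mathrm{E}[J_L(\omega;h)^2]|^2$ and the connected partition contributes the joint fourth cumulant of the DFT, so after subtracting $(\mathrm{E}[I_L(\omega;h)])^2$,
\begin{equation*}
\mathrm{var}[I_L(\omega;h)] = \big(\mathrm{E}[I_L(\omega;h)]\big)^2 + \big|\mathrm{E}[J_L(\omega;h)^2]\big|^2 + \mathrm{cum}\big(J_L(\omega;h),\overline{J_L(\omega;h)},J_L(\omega;h),\overline{J_L(\omega;h)}\big).
\end{equation*}
For the first term, $\sum_\tau|\gamma(\tau)|<\infty$ makes $f$ bounded, so $\mathrm{E}[I_L(\omega;h)]=f(\omega)*\mathcal{H}_L(\omega)$ is bounded uniformly in $L$; combining this with Theorem~\ref{the:welch_var}a, $\bar I_L(\omega;h)=f(\omega)*\mathcal{H}_L(\omega)+\mathcal{O}_p(M^{-1/2})$, and squaring gives $(\mathrm{E}[I_L(\omega;h)])^2=\bar I_L(\omega;h)^2+\mathcal{O}_p(M^{-1/2})$. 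This is the only stochastic piece and supplies the $M^{-1/2}$ term in both (a) and (b).

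Next I would write $\mathrm{E}[J_L(\omega;h)^2]=(2\pi)^{-1}\int f(\lambda)H_L(\omega-\lambda)H_L(\omega+\lambda)\,d\lambda$, where $H_L$ is the taper transfer function with $|H_L|^2=\mathcal{H}_L$. The factor $H_L(\omega-\lambda)$ concentrates near $\lambda=\omega$ and $H_L(\omega+\lambda)$ near $\lambda=-\omega$; for $\omega\neq 0,\omega_\mathcal{N}$ these points are separated modulo $2\pi/\Delta$, so splitting the integral into neighbourhoods of $\pm\omega$ and the complement, and using the Dirichlet-type concentration of $H_L$ (its $L^1$-mass is $\mathcal{O}(L^{-1/2}\log L)$ and it is $\mathcal{O}(L^{-1/2})$ away from its peak), one obtains $|\mathrm{E}[J_L(\omega;h)^2]|=\mathcal{O}(L^{-1}\log L)$, hence $|\mathrm{E}[J_L(\omega;h)^2]|^2=\mathcal{O}(L^{-2}\log^2 L)$. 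I expect this to be the main obstacle: it is precisely where the exclusion of $\omega\in\{0,\omega_\mathcal{N}\}$ is needed, and making the leakage bound hold uniformly over tapers rather than only for $h_t=L^{-1/2}$ requires invoking the standard taper-regularity conditions so that $\mathcal{H}_L$ behaves as an approximate identity and $\sup_t h_t^2=\mathcal{O}(L^{-1})$.

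For the third term I would expand the cumulant by multilinearity as $\Delta^2\sum_{t_1,t_2,t_3,t_4}h_{t_1}h_{t_2}h_{t_3}h_{t_4}\,e^{-i\omega(t_1-t_2+t_3-t_4)\Delta}\,\mathrm{cum}(x_{t_1},x_{t_2},x_{t_3},x_{t_4})$, substitute $\mathrm{cum}(x_{t_1},\dots,x_{t_4})=\mathcal{K}(t_2-t_1,t_3-t_1,t_4-t_1)$ by fourth-order stationarity, reindex relative to $t_1$, and bound $|\sum_t h_t h_{t+a}h_{t+b}h_{t+c}|$ by $\sup_t h_t^2$ (majorise two factors by $\sup_t h_t^2$ and the remaining pair by Cauchy--Schwarz using $\sum_t h_t^2=1$). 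Absolute convergence of $\mathcal{K}$ then gives $|\mathrm{cum}(\cdot)|=\mathcal{O}(\sup_t h_t^2)=\mathcal{O}(L^{-1})$, which with the above yields part (b); and when $\mathcal{K}\equiv 0$ this cumulant vanishes identically, so the only error beyond $M^{-1/2}$ is the $\log^2 L/L^2$ leakage, giving part (a). Collecting the three estimates completes the proof.
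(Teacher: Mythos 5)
Your proof is correct and follows essentially the same route as the paper's: the moment-to-cumulant decomposition $\mathrm{var}[I_L(\omega;h)] = (\mathrm{E}[I_L(\omega;h)])^2 + |\mathrm{E}[J_L(\omega;h)^2]|^2 + \mathrm{cum}_4$, with Theorem~\ref{the:welch_var}a supplying the $\mathcal{O}_p(M^{-1/2})$ term, the $L^1$--$L^\infty$ leakage bound on $H_L(\omega-\lambda)H_L(\omega+\lambda)$ away from $\omega=0,\omega_\mathcal{N}$ supplying the $\log^2 L/L^2$ term, and the $\sup_t h_t^2$ bound on the reindexed fourth-cumulant sum supplying the $1/L$ term. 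The one caveat you rightly flag --- that the kernel concentration and $\sup_t h_t^2=\mathcal{O}(L^{-1})$ do not follow from $\sum_t h_t^2=1$ alone and require standard additional taper regularity --- is equally implicit in the theorem as stated, so it is not a defect specific to your argument.
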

\begin{proof}
The proof is provided in the Supplementary Material.
\end{proof} 

\begin{remark}
    From Theorem~1 of \cite{sykulski2019debiased}, we find that the debiased Welch estimator, $\hat{I}(\omega_k)$, when computed through \eqref{eqn:debiased_whittle}, is unbiased so that $\mathrm{E}[\hat{I}(\omega_k)] = f(\omega_k; \vartheta)$. The bias that results from the Riemannian basis approximation can be readily quantified. As $f(\omega)$ has a bounded first derivative $f'(\omega)$, the bias that results from the Riemann approximation is $\mathrm{bias}[f(\omega; \vartheta)] = \mathcal{O}_p(L^{-1})$. Following \cite{chui1971concerning}, if we further assume that $f'(\omega)$ is of bounded variation, we may strengthen this statement to say $\mathrm{bias}[f(\omega; \vartheta)] = \mathcal{O}_p(L^{-2})$. As $\hat{I}(\omega_k)$ is a linear transform of $\bar{I}_L(\omega;h)$ we find that $\mathrm{var}[\hat{I}(\omega_k)] = C(K,L,h)\mathrm{var}[\bar{I}_L(\omega;h)] = \mathcal{O}_p(M^{-1})$ where $C(K,L,h) > 0$ is a constant that depends on the choice of $K$, $L$ and the data taper, and is not necessarily bounded above by 1. Thus, in terms of mean squared error, the optimal parametrisation is $L = \mathcal{O}(n^{1/3})$, or $L = \mathcal{O}(n^{1/5})$ if $f'(\omega)$ is of bounded variation. The weighted least squares routine in \eqref{eqn:least_squares2} follows asymptotic arguments in Theorems~1 and 2 that result in an estimator that is asymptotically unbiased and $\mathcal{O}_p(M^{-1/2})$ as long as $L$ and $M$ both increase with $n$. As $C(K,L,h)$ is not bounded above by $1$, the debiased Welch estimator is not guaranteed to have a smaller mean squared error as compared to the Welch estimator given the same parameterisation. However, as the debiased Welch estimator corrects for the effects of bias, in practice we may apportion between $L = \mathcal{O}(n^\alpha)$ and $M = \mathcal{O}(n^{1-\alpha})$ with a lower $\alpha$ than would be chosen for Welch's estimator to minimise mean squared error (see also Remark 1), allowing for extra variance reduction, as we shall demonstrate for finite samples in Section~\ref{sec:matern}. This is a similar result to \cite{politis1995bias} who study bias-correction of lag-window estimators.
\end{remark}

\subsection{Computation} \label{sec:computation}

We summarize the computation of the debiased Welch estimator in Algorithm~1 in the Supplementary Material, with two variants provided. The first variant, shown in lines 3--8 of Algorithm~1, provides the option to space the $K$ bases evenly or unevenly over frequency. Even spacing of the bases provides a closer analogue to Welch's estimator; here, the choice of integer valued $K \leq \lceil (L-1)/2 \rceil$ is the only additional user decision required by the debiased Welch estimator. When there is large bias present in Welch's estimator, choosing $K$ close to $\lceil (L-1)/2 \rceil$ can lead to an ill-defined solution. In general, we recommend $K = \lceil (L-1)/4 \rceil$ as a good initial estimate which down-samples the Fourier frequencies by an order of two. For uneven spacing, the centres and widths are all user-specified. We discuss uneven spacing further in Section~6 with an example provided for high-frequency slope estimation. The second variant of the model, shown in lines 12--15 of Algorithm~1, provides an option to impose non-negativity on $\vartheta$ in \eqref{eqn:least_squares2}, constraining the solution space to $\hat{\vartheta} \in \mathcal{R}^+$. As discussed, the debiased Welch estimate may result in small negative values. This is particularly so when the choice of $K$ is too high and so the solution to the linear system in \eqref{eqn:least_squares2} is unstable. Efficiently computing non-negative least-squares estimates has been long understood; in the accompanying code we implement results from \cite{lawson1995solving}. Note, that the non-negative least-squares solution is guaranteed to have a lower mean squared error, see proof of Theorem 2 in \cite{politis1995bias}. The computational order of the debiased Welch estimator, for both even and uneven spacing, is $\mathcal{O}(\max\{n \log L, K^3\})$. Here, the first term is the computational order of Welch's estimator, and the second term is the computational order to do the debiasing in \eqref{eqn:least_squares2}. The number of bases $K$ scales with $L$ and so $\mathcal{O}(K^3) \propto \mathcal{O}(L^3)$ for even spacing. From Theorem~\ref{the:variance}, we observe that the optimal rate with which to scale $L$ with $n$ is at maximum $\mathcal{O}(n^{1/3})$, and so the debiasing in \eqref{eqn:least_squares2} scales at most $\mathcal{O}(n)$. Thus, the debiased Welch estimator retains the same computational order of the standard Welch estimator, $\mathcal{O}(n \log L)$.

\section{Simulations studies} \label{sec:simulations}

\subsection{Autoregressive processes} \label{sec:ar4}

\begin{figure}[b!]
  \centering
  \includegraphics[width = 0.7\linewidth]{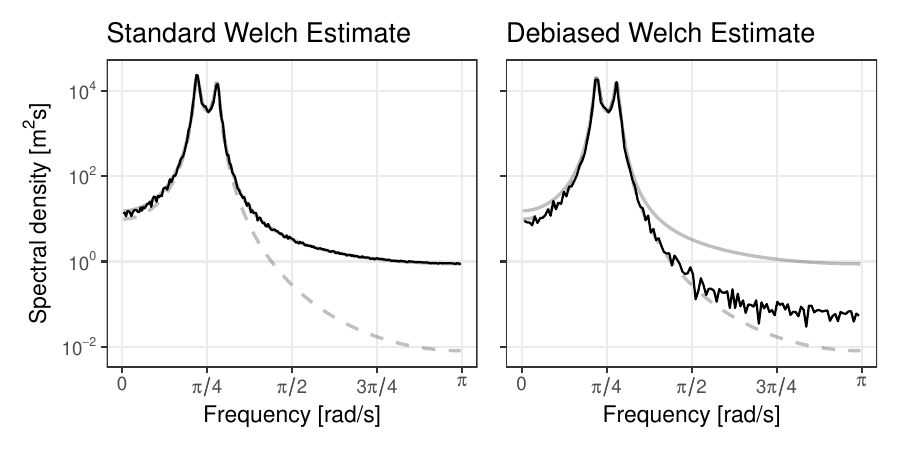}
  \caption{Welch estimates of an $\textsc{ar(4)}$ random time series. The $\textsc{ar(4)}$ model is defined formally in the text; the true spectral density is shown by the dashed grey line and the expected Welch estimate is shown by the solid grey line. The Welch estimates are shown by solid black lines with the standard Welch estimate shown left, and the debiased Welch estimate shown right.}
  \label{fig:ar4_peridograms}
\end{figure}

To test empirical performance of the debiased Welch estimator, we run a simulation study using a canonical model for biased spectral density estimation: the $\textsc{ar(4)}$ model of \cite{percival1993spectral}. The model is specified as $X_t = \sum_{j = 1}^4 \phi_j X_{t - j} + \epsilon_t$ where $\epsilon_t \sim \mathcal{N}(0, \sigma^2)$ and $\{\phi_1, \phi_2, \phi_3, \phi_4, \sigma\} = \{2.7607, -3.8106,$ $2.6535, -0.9238, 1\}$. To plot units in the simulation studies in Sections~\ref{sec:simulations} and \ref{sec:uneven}, we measure $\{X_t\}$ in metres, $t$ in seconds and assume $\Delta = 1$. Throughout, we calculate standard periodograms, corresponding to $h_t = L^{-1/2}$. Similar results with other tapers were found, with the bias in Welch's estimator being reduced but not eliminated. First, we plot in Figure~\ref{fig:ar4_peridograms} the standard (left) and debiased (right) Welch estimates with black solid lines, calculated from a single realisation from the $\textsc{ar(4)}$ model. We generate a time series of length $2^{15}$ $(32,768)$ and set $M = 64$ ($2^6$) and $L = 512$ ($2^9$) with zero overlap for both estimates. In both plots, the true spectral density is shown by the dashed grey line, and the expected value of Welch's estimator, calculated from \eqref{eqn:debiased_fft} using the auto-covariance function of the $\textsc{ar(4)}$ model, is shown by the solid grey line. Visually, the debiased Welch estimate provides a closer estimate of the true spectral density, especially in regions of low power where we know the bias to be significant.

\begin{figure}
	\centering
	\includegraphics[width = 0.65\linewidth]{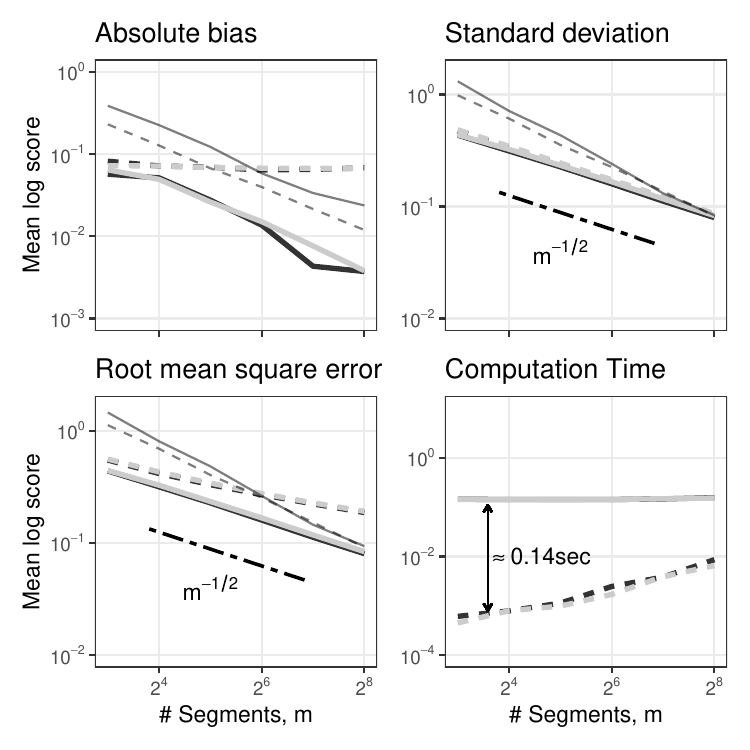}
	\caption{Empirical bias, standard deviation, root mean squared error and computation times of nonparametric estimates over $M$, for fixed $L$. Standard and debiased Welch estimates are the dashed and solid thick lines, multitaper and lag-window estimates are the dashed and solid thin lines, respectively. Each point represents the metrics calculated over an 1000 member ensemble, for each frequency, and aggregated by the mean log values. The grey lines represent an overlap of 0.5 and the black lines with an overlap of 0.}
	\label{fig:ar4_metrics}
\end{figure}

We now explore how the bias, standard deviation and root-mean-square error of the Welch estimators behave as a function of $M$ for fixed $L$. For each $M \in \{8, 16, 32, 64, 128, 256\}$ and $p \in \{0, 0.5\}$ we generate 1000 random time series from the $\textsc{ar(4)}$ model with fixed segment length $L = 1024$ ($2^{10}$). For each $M$ and $p$, we calculate, over the ensemble, the empirical mean absolute bias, standard deviation, and root-mean-square error, at each frequency. We aggregate the scores by computing the mean log value across frequency and for each metric; this makes the error process of the metrics independent of the expected power at each frequency and hence aggregates the frequencies with equal importance. Results are shown in Figure~\ref{fig:ar4_metrics}. The thick dashed lines show the results from the Welch estimator, the thick solid lines from the debiased Welch estimator, the grey lines with an overlap of $p = 0.5$ and the black lines with an overlap of $p = 0$ (i.e. Bartlett's estimator). As a point of comparison to other nonparametric spectral density estimation methodologies, we also include the results from a lag-window estimator with modified Daniell window, shown by the thin solid line, and a multitaper estimator calculated with Slepian sequence tapers, shown by the thin dashed line. We choose the window-width and number of tapers to provide similar estimator bandwidth for each value of $M$. The results of this simulation study support a number of our theoretical claims. The black dot-dashed line shows the $M^{-1/2}$ convergence rate which is consistent with the theory established in Theorem~\ref{the:welch_var}a for the Welch estimator, and is maintained for the debiased Welch estimator. The lag-window, multitaper estimator, and debiased Welch estimators all exhibit similar asymptotic behaviour in absolute bias. The absolute bias tends to zero for the debiased Welch estimator as the approximation in Theorem 2 improves with increasing $M$, justifying the computation of the debiased Welch estimates using \eqref{eqn:least_squares2}. The absolute bias does not converge to zero for the standard Welch estimator due to the finite-sample bias of periodograms of fixed length $L$ as already established in Theorem 1a. Finally, we show computation times of the standard and debiased Welch estimates. As expected, the debiased Welch estimator is approximately constant over $M$. The discrepancy in time is due to the additional operations on top of calculating Welch's estimator and is negligible in practice. We further include in the supplementary material a similar simulation study that examines the performance of the estimators as a function of $L$ for fixed $M$. We see similar benefits from using the debiased Welch estimator as demonstrated above for fixed $L$.


\subsection{Mat\'ern processes} \label{sec:matern}

\begin{figure}[b!]
	\centering
	\includegraphics[width = 0.9\linewidth]{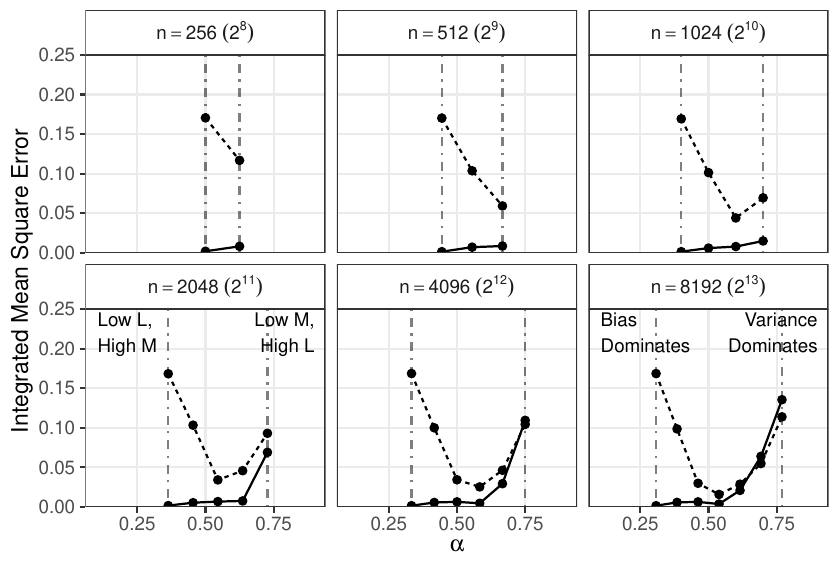}
	\caption{Integrated mean squared error for Welch (dashed) and debiased Welch (solid) estimators calculated over $\alpha$, defined by $L = n^\alpha$ with $p=0$, so that $M = n^{1 - \alpha}$, and for time-series lengths $n \in \{2^8, \dots, 2^{13}\}$. The left vertical dot-dashed lines correspond to parameterisation of $L = 16$ and the right correspond to $M = 8$.}
	\label{fig:matern_imse}
\end{figure}

Section~\ref{sec:ar4} examines performance of the debiased Welch estimator as a function of $M$, for medium to long lengths of time-series and for a discrete time process. We now analyse the debiased Welch estimator for shorter time series for a continuous time process with fixed $n$ and varying $\alpha$, defined by $L=n^\alpha$. We simulate from a Mat\'ern process parameterised as
\begin{equation*}
    \gamma(\tau) = \frac{2\sigma^2}{\Gamma(\nu)2^{\nu}} |\tau|^{\nu} \kappa_{\nu}(|\tau|), \hspace{5mm} f(\omega) = \frac{\sigma^2}{(\omega^2 + \lambda^2)^{\nu + 1/2}},
\end{equation*}
with $\sigma = 1$, $\lambda = 0.1$ and $\alpha = 4/3$, where $\Gamma(\cdot)$ denotes the gamma function and $\kappa_{\nu}(\cdot)$ the order $\nu$ modified Bessel function of the second kind. The Mat\'ern process is asymptotically log-linear in terms of its spectral decay and so is useful for modelling many observed processes; for example, background ocean spectra and turbulent dissipation in oceanography \citep{garrett1972space,lilly2017fractional}, stellar rotation periods and red-shift in astronomy \citep{foreman2017fast} and volatility processes in finance and economics \citep{heinrich2019hybrid}. Figure~\ref{fig:matern_imse} shows the integrated mean squared error averaged over 1000 random simulations of the Welch (dashed) and debiased Welch (solid) estimates as a function of $\alpha$ and for $n \in \{2^8, \dots, 2^{13}\}$. For the shorter time-series, shown in the top row, the debiased Welch estimator performs better over all values of $\alpha$. There is insufficient resolution to observe the expected bias-variance trade-off in the Welch estimator; although, with increasing $n$, shown in the bottom row, this behaviour becomes apparent. For small values of $\alpha$, bias dominates the the error; conversely, for large values of $\alpha$ variance dominates. The debiased Welch estimator removes the majority of the error in the bias dominated regions. For certain high values of $\alpha$ the debiased Welch estimator does not perform as well, although these values correspond to small $M$, a parameterisation that we do not recommend for the debiased Welch estimator. As discussed in Remark~2 following Theorem~\ref{the:variance}, we recommend scaling $\alpha < 1/3$ to minimise mean squared error: the debiased Welch estimator gives better performance across all $n$ in this region thus supporting the theoretical results.

\section{Application to coastal wave monitoring}

We now demonstrate the use of the debiased Welch estimator applied to a measured data-set of coastal sea-surface heights. The data are measured from two locations with differing mean water depths on the Southern coast of Western Australia: Ocean Beach and Torbay. Contemporaneous measurements of sea-surface height are taken from acoustic and pressure sensors. Acoustic sensors provide measurements of the true process via acoustic ranging and pressure sensors measure pressure variation on the sea-bed and translate this to a measurement of sea surface height. As compared to pressure sensors, acoustic sensors are more expensive, and due to larger power requirements, do not allow for long field deployments. Shallow water wave theory states that pressure is attenuated through depth as
\begin{equation} \label{eqn:attenuation}
    \Lambda_p(z;k) = \frac{\cosh(k(d+z))}{\cosh(k d)},
\end{equation}
where $d$ is the undisturbed depth, $-z$ is depth from the sea-surface to the sensor and, here, $k$ is wavenumber. We can also define \eqref{eqn:attenuation} as a function of frequency using the dispersion relation \citep{dean1991water}. We show two 5-minute chunks of the measured time series from both sites in Figure~\ref{fig:wave_ts}; data are captured at $1 \, \mathrm{Hz}$ such that $\Delta = 1$, and the solid grey and dashed black lines are measurements from the acoustic and pressure sensors, respectively. On the right of Figure~\ref{fig:wave_ts} we show pressure attenuation over frequency. The high frequency signal in the Torbay measurements see larger attenuation than in the Ocean Beach measurements due to the deeper mean water level. Due to the attenuation of the high frequency signal, in both records, we expect a high dynamic range in the pressure signal, and hence non-ignorable spectral bias due to blurring.

\begin{figure}[t!]
  \centering
  \includegraphics[width = \linewidth]{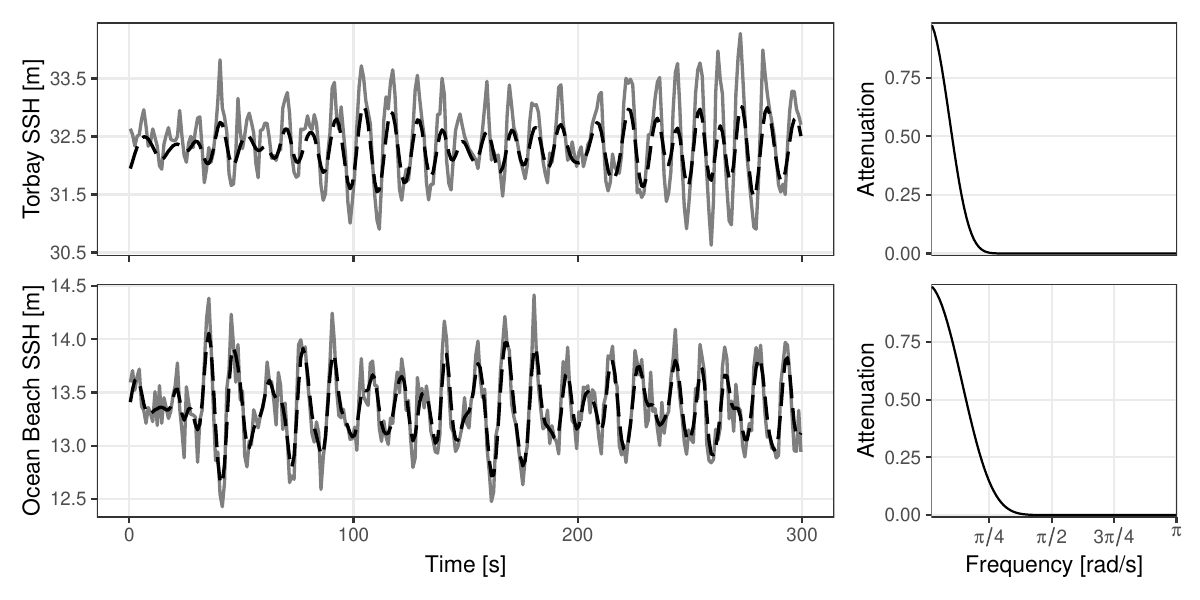}
  \caption{Contemporaneous measurements of sea-surface height from acoustic (grey) and pressure (black-dashed) sensors, measured in Western Australia at Torbay (top) and Ocean Beach (bottom). The right hand plots show the attenuation of the pressure signal over frequency and are dependent on the mean water depths.}
  \label{fig:wave_ts}
\end{figure}

We calculate power spectral density estimates of a 100 minute record ($n=6144$) of sea-surface heights at each location; the results are shown in Figure~\ref{fig:wave_data}. First, we calculate the Welch estimates, with parameters $L=256$, $M=47$ and $p=0.5$, shown in the left hand panels. Here, the grey and black lines are from the acoustic measurements, respectively. As expected, there is good agreement between the estimators at low frequencies, and the pressure signal is attenuated at higher frequencies. The spectral shape of the Ocean Beach record is interesting: there are two spectral peaks around $\pi/8\, \mathrm{rad} \, \mathrm{s}^{-1}$ and $\pi/4\, \mathrm{rad} \, \mathrm{s}^{-1}$, implying the presence of both a well formed swell (low-frequency) and wind-sea (high-frequency), likely due to a local storm. This behaviour is hard to capture parametrically and is a good example to motivate the use of non-parametric methods. The black lines in the middle plots are the debiased Welch estimates, with additional parameter $K=64$, calculated from the pressure signal, and the grey lines are the Welch estimates, for reference. For comparison, we also show the lag-windowed and multitaper estimates, described in Section~\ref{sec:ar4}. In this application, these estimates show very little bias, and so provide good validation of the performance of the debiased Welch estimator. We do not display the debiased estimate for the acoustic signal as the bias correction is very small here due to the small dynamic range. Finally, the right hand plot shows the Welch estimates of the pressure measurements inverted through the attenuation function to give an estimate of the true wave power spectra; the thick grey and black lines show the inverted Welch and debiased Welch estimates, respectively. These estimates are not shown above $3\pi/8\, \mathrm{rad} \, \mathrm{s}^{-1}$ and $5\pi/8\, \mathrm{rad} \, \mathrm{s}^{-1}$ for Torbay and Ocean Beach, respectively, as this is where the noise floor of the pressure sensor dominates the signal. The thin grey line is the acoustic Welch estimate, also shown in the left hand plots, and is used for a notion of the ground truth. Due to in-situ pre-processing of the pressure signal, whereby the signal is demeaned of tidal effects, there is some bias between the two signals at low-frequencies below $\pi/8\, \mathrm{rad} \, \mathrm{s}^{-1}$ across the entire data-set. Overall, the debiased Welch estimator gives a better estimate to the true process over a broader range of frequencies. Capturing swell dissipation, that is, the power transition from laminar/low to turbulent/high frequencies, is vital for our understanding of swell propagation. Although there is relatively low variance captured in the biased low-power regions, it is known that incorrect description of swell dissipation has large effects in forecasting models \citep[e.g.][]{ardhuin2009observation}. Here, use of the debiased Welch estimator, resolves almost twice the amount of high-frequency signal as the standard Welch estimate and thus provides a more accurate measure of swell dissipation.

\begin{figure}[t!]
\centering
\includegraphics[width = \linewidth]{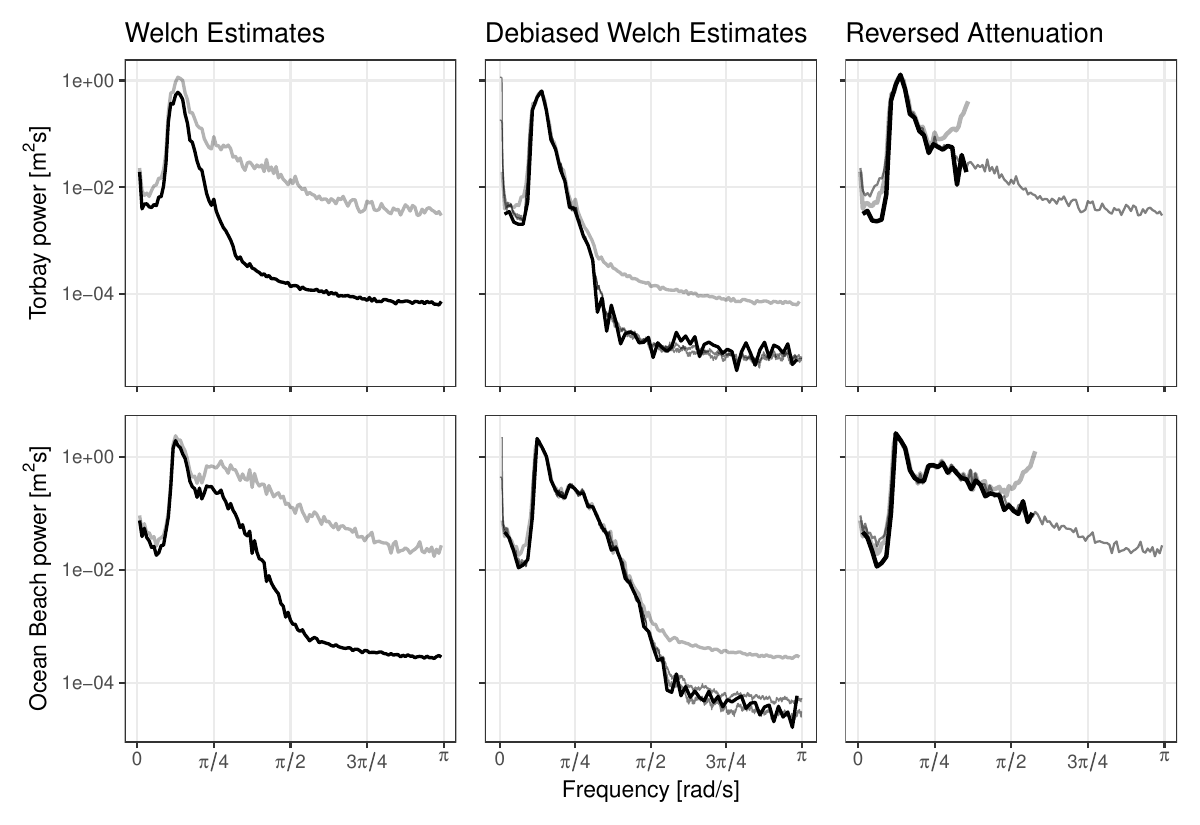}
\caption{Welch estimates of sea-surface height spectra from acoustic and pressure sensors located at Torbay (top) and Ocean Beach (bottom). The left plots show the Welch estimates of the acoustic (grey) and pressure (black) measurements. The middle plots show the debiased Welch estimates of the pressure data (black) and, for comparison, the Welch estimate (grey). The thin lines are the lag-window and multitaper estimates described in Section~\ref{sec:ar4}. The right plots show the standard (grey) and debiased (black) Welch estimates of the pressure measurements accounting for attenuation. The thin grey line is the acoustic signal used as a notion of ground truth.}
\label{fig:wave_data}
\end{figure}

\section{Unevenly spaced bases for signal compression} \label{sec:uneven}

\begin{figure}[t!]
    \centering
    \includegraphics[width = 0.77\linewidth]{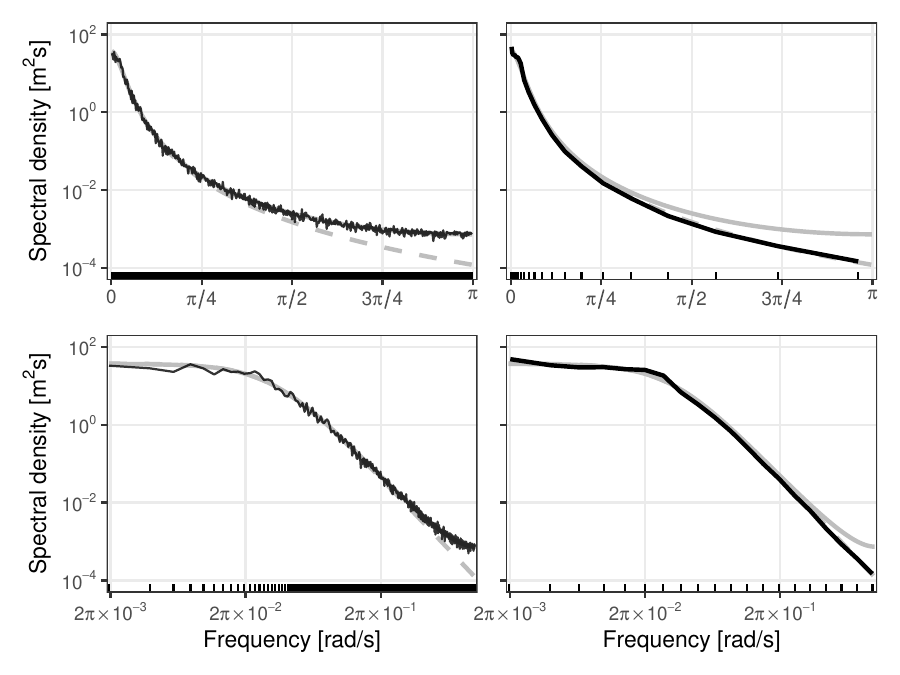}
    \caption{Welch estimates of a Mat\'ern power spectrum with uneven basis spacing. The top plots show frequency on a linear scale and the bottom plots on a log scale. The standard Welch estimate is shown left and debiased Welch estimate, with uneven basis spacing, is shown right with black solid lines. Locations of the Fourier frequencies and basis centres are shown by the rug plots on the frequency axes. In all plots, the true power spectral distribution and expected Welch estimate are shown by the dashed and solid grey lines, respectively.}
    \label{fig:materns}
\end{figure}

To be analogous to Welch's estimator, in Section~\ref{sec:debiased_welch} we define the debiased Welch estimator with even basis spacing over frequency. As noted therein, we do not require this, and can alternatively define the bases by an irregular partition over frequency. For spectra whose value varies slowly over certain neighbourhoods of frequency, we argue that a wider basis spacing would lead to little reduction of accuracy in estimating the true process. Allowing for wider, and irregularly spaced bases serves two purposes: first we achieve a degree of signal compression, as compared to the Welch estimator; and second, we reduce the variance of the estimator as there are more data informing the coefficient of each basis. The code provided alongside this manuscript accepts uneven bases by providing the $\omega^\mathrm{c}_k$ and $\delta_k$, as defined in \eqref{eqn:riemann}, as user defined inputs for all $k$. To demonstrate this concept, we return to the Mat\'ern model simulated in Section~\ref{sec:matern}. We sample a length $32,768$ $(2^{15})$ time series and calculate the Welch estimate, with parameters $M = 32$ $(2^5)$, $L = 1024$ $(2^{10})$ and no overlap, shown in the left plots of Figure~\ref{fig:materns} in black. The true power spectral density and the expected value of Welch's estimator for the Mat\'ern process are shown in the plots in Figure~\ref{fig:materns} by the dashed and solid grey lines, respectively. The Mat\'ern power spectral density is asymptotically, in $\omega$, log-linear and so we space the bases of the debiased Welch estimator evenly in log-frequency. The location of the bin-centres, $\omega_k^\mathrm{c}$, are shown by the rug-plot on the frequency axes in Figure~\ref{fig:materns}. The debiased Welch estimate, with log-spaced bases, is calculated and shown in the right plots of Figure~\ref{fig:materns} in black. The bottom plots of Figure~\ref{fig:materns} are identical to the top, except with frequency shown on a log scale. Not only do we reduce empirical bias in the estimate, we also compress the signal by a factor of $\sim50$ and reduce the standard deviation of the estimate from the true spectral density.

\section*{Acknowledgement}
All authors are supported by the ARC ITRH for Transforming energy Infrastructure through Digital Engineering (TIDE), Grant No. IH200100009. We further thank Dr. Michael Cuttler and Dr. Jeff Hanson for providing us with the wave data examined in Section 5.

\vspace*{-10pt}

\bibliographystyle{model2-names.bst}
\bibliography{paper-ref.bib}

\end{document}